\definecolor{lightgray}{rgb}{.9,.9,.9}
\definecolor{darkgray}{rgb}{.4,.4,.4}
\definecolor{purple}{rgb}{0.65, 0.12, 0.82}
\lstdefinelanguage{TLSF}{
  keywords={INFO, TITLE, DESCRIPTION, SEMANTICS, TARGET, MAIN, INPUTS, OUTPUTS, ASSUMPTIONS, GUARANTEES, ASSERT, INITIALLY, REQUIRE, PRESET, ASSUME, GLOBAL, PARAMETERS, DEFINITIONS, IN},
  keywordstyle=\color{blue}\bfseries,
  ndkeywords={class, export, boolean, throw, implements, import, this},
  ndkeywordstyle=\color{darkgray}\bfseries,
  morekeywords={G,X},
  identifierstyle=\color{black},
  sensitive=false,
  comment=[l]{//},
  morecomment=[s]{[}{]},
  commentstyle=\color{red}\ttfamily,
  stringstyle=\color{red}\ttfamily,
  morestring=[b]',
  basicstyle=\small\ttfamily,
  morestring=[b]"
}
\lstdefinelanguage{Haskell}{
    basicstyle=\ttfamily\scriptsize,
    keywordstyle=\color{blue}\bfseries,
    commentstyle=\color{gray},
    stringstyle=\color{olive},
    morekeywords={if, then, else, let, in, case, of, where, do, module, import, qualified},
    sensitive=true,
    morecomment=[l]{--},
    morecomment=[s]{\{-}{-\}},
    morestring=[b]",
    morestring=[b]',
}
\newcommand{\G}{\LTLsquare}
\newcommand{\X}{\LTLcircle}
\newcommand{\U}{\mathbin{\mathcal{U}}}
\newcommand{\contract}{S }
\newcommand{\contractfull}{S = (A,G)}
\newcommand{\TRUE}{\textit{True}\xspace}
\newcommand{\FALSE}{\textit{False}\xspace}
\newcommand{\changing}[1]{#1}
\newcommand{\LTL}{\textrm{LTL}\xspace}
\newcommand{\LTLX}{\ensuremath{\LTL_X}\xspace}
\newcommand{\IniE}{\ensuremath{I_e}}
\newcommand{\IniS}{\ensuremath{I_s}}
\newcommand{\AssE}{\ensuremath{\varphi_e}}
\newcommand{\AssS}{\ensuremath{\varphi_s}}
\newcommand{\ASF}{\KWD{ASF}}
\newcommand{\GXz}{\ensuremath{\textsf{GX}_0}\xspace}
\def\orcidID#1{\smash{\href{http://orcid.org/#1}{\protect\raisebox{-1.25pt}{\protect\includegraphics{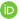}}}}}
\begin{document}

\setcounter{page}{1}

 \newcommand{\Thanks}{\thanks{Funded by PRODIGY Project 
  (TED2021-132464B-I00)---funded by MCIN/AEI/ 10.13039/501100011033 and
  the EU NextGenerationEU/PRTR---,by DECO Project
  (PID2022-138072OB-I00)---funded by MCIN/AEI/10.13039/501100011033 and
  by the ESF+---and by a research grant from Nomadic Labs and the Tezos
  Foundation.}}

\title{Efficient Reactive Synthesis\\ Using Mode Decomposition\Thanks}

\author{Mat\'ias Brizzio\inst{1,2}\orcidID{0009-0000-9427-9345}\and C\'esar S\'anchez\inst{1}\orcidID{0000-0003-3927-4773}}
\institute{IMDEA Software Institute, Spain \and
  Universidad Politécnica de Madrid, Madrid, Spain}






\maketitle

\begin{abstract}
  Developing critical components, such as mission controllers or
  embedded systems, is a challenging task.
  Reactive synthesis is a technique to automatically produce correct
  controllers.
  Given a high-level specification written in LTL, reactive synthesis
  consists of computing a system that satisfies the specification as
  long as the environment respects the assumptions.
  Unfortunately, LTL synthesis suffers from high computational
  complexity which precludes its use for many large cases.

  A promising approach to improve synthesis scalability consists of
  decomposing a safety specification into a smaller specifications,
  that can be processed independently and composed into a solution for
  the original specification.
  Previous decomposition methods focus on identifying independent
  parts of the specification whose systems are combined via
  simultaneous execution.
  
  In this work, we propose a novel decomposition algorithm based on
  \emph{modes}, which consists on decomposing a complex safety
  specification into smaller problems whose solution is then composed
  \emph{sequentially} (instead of simultaneously).
  The input to our algorithm is the original specification and the
  description of the modes.
  We show how to generate sub-specifications automatically and we
  prove that if all sub-problems are realizable then the full
  specification is realizable.
  Moreover, we show how to construct a system for the original
  specification from sub-systems for the decomposed specifications.
  We finally illustrate the feasibility of our approach with multiple
  cases studies using off-the-self synthesis tools to process the
  obtained sub-problems.
\end{abstract}


\section{Introduction}
\label{sec:intro}

Reactive synthesis~\cite{Church1962LogicAA} is the problem of
constructing a reactive system automatically from a high-level
description of its desired behavior.
A reactive system \changing{interacts} continuously with an uncontrollable
external environment~\cite{church1963application}.
The specification describes both the assumptions that the environment
is supposed to follow and the goal that the system must satisfy.
Reactive synthesis guarantees that every execution of the system
synthesized satisfies the specification as long as the environment
respects the assumptions.

Linear-Time Temporal Logic (LTL)~\cite{pnueli77temporal} is a widely
used formalism in verification~\cite{manna95temporal} and
synthesis~\cite{PnueliRosner1989} of reactive systems.
Reactive synthesis can produce controllers which are essential for
various applications, including hardware design~\cite{bloem12synthesis}
and control of autonomous robotic
systems~\cite{kress2011correct,dippolito13synhesizing}.

Many reactive synthesis tools have been developed in recent
years~\cite{finucane2011designing,ehlers2016slugs} in spite of the
high complexity of the synthesis problem.
Reactive synthesis for full LTL is
2EXPTIME-complete~\cite{PnueliRosner1989}, so LTL fragments with
better complexity have been identified.
For example, GR(1)---general reactivity with rank 1---enjoys an
efficient (polynomial) symbolic synthesis
algorithm~\cite{bloem12synthesis}.
Even though GR(1) can express the safety fragment of LTL considered in
this paper, translating our specifications into GR(1) involves at
least an exponential blow-up in the worst case~\cite{hermo23tableaux}.
Better scalable algorithms for reactive synthesis are still
required~\cite{kupferman2012recent}.

Model checking, which consists on deciding whether a \emph{given
  system} satisfies the specification, is an easier problem than
synthesis.
Compositional approaches to model checking break down the analysis
into smaller sub-tasks, which significantly improve the performance.
Similarly, in this paper we aim to improve the scalability of reactive
synthesis introducing a novel decomposition approach that breaks down
the original specification into multiple sub-specifications.

There are theoretical compositional approaches
\cite{esparza2014ltl,kupferman2006safraless}, and implementations that
handle large
conjunctions~\cite{bansal2020hybrid,de2021compositional,meyer2018strix}.
For instance, Lisa~\cite{bansal2020hybrid} has successfully scaled
synthesis to significant conjunctions of LTL formulas over finite
traces (a.k.a. LTL$_f$\cite{de2013linear}).
Lisa is further extended to handling prominent disjunctions in
Lydia~\cite{de2021compositional}.
These modular synthesis approaches rely heavily on the decomposition
of the specification into simultaneous
sub-specifications~\cite{finkbeiner2022specification}.
However, when sub-specifications share multiple variables, these
approaches typically return the exact original specification, failing
to generate smaller decompositions.

We tackle this difficulty by introducing a novel decomposition
algorithm for safety LTL specifications.
We chose the safety fragment of
LTL~\cite{sistla1994safety,kupferman01model} because it is a
fundamental requirement language in many safety-critical applications.
Extending our approach to larger temporal fragments of LTL is future
work.

To break down a specification we use the concept of \textit{mode}.
A mode is a sub-set of the states in which the system can be during
its execution which is of particular relevance for the designer of the
system.
At any given point in the execution, the system is in a single mode,
and during an execution the system can transition between modes.
In requirement design, the intention of modes is often explicitly
expressed by the requirement engineers as a \emph{high-level state
  machine}.
Using LTL reactive synthesis these modes are boiled down into
additional LTL requirements , which are then processed with the rest
of the specification.
In this paper, we propose to exploit modes to decompose the
specification into multiple synthesis sub-problems.

%
%
Most previous decomposition
methods~\cite{Ianopollo2018,finkbeiner2022specification} break
specifications into independent \emph{simultaneous} sub-specifications
whose corresponding games are solved independently and the system
strategies composed easily.
In contrast, we propose \emph{sequential} games, one for each mode.
For each mode decomposition, we restrict the conditions under which
each mode can ``jump'' into another mode based on the initial conditions
of the arriving mode.
From the point of local analysis of the game that corresponds to a
mode, jumping into another mode is permanently winning.
We show in this paper that our decomposition approach is
sound---meaning that given a specification, system modes and initial
conditions---if all the sub-specifications generated are realizable,
then the original specification is realizable.
Moreover, we show a synthesis method that efficiently constructs a
system for the full specification from systems synthesized for the
sub-specifications.
An additional advantage of our method is that the automaton that
encodes the solution is structured according to the modes proposed, so
it is simpler to understand by the user.

\subsubsection{Related Work.}
%
The problem of reactive synthesis from temporal logic specifications
has been studied for many
years~\cite{EmersonClarke1982,PnueliRosner1989,AlurLaTorre2001,bloem12synthesis}.
%
%
Given its high complexity (2EXPTIME-complete~\cite{PnueliRosner1989})
easier fragments of LTL have been studied.
For example, reactive synthesis for GR(1) specifications can be solved
in polynomial time~\cite{bloem12synthesis}.
Safety-LTL has attracted significant interest due to its algorithmic
simplicity compared to general LTL synthesis~\cite{zhu2017symbolic},
but the construction of deterministic safety automaton presents a
performance bottleneck for large formulas.

For the model-checking problem, compositional approaches improve the
scalability significantly~\cite{roever1998}, even for large formulas.
Remarkably, these approaches break down the analysis into smaller
sub-tasks~\cite{PnueliRosner1989}.
For model-checking, Dureja and Rozier~\cite{dureja2018more} propose to
analyze dependencies between properties to reduce the number of
model-checking tasks.
Recently, Finkbeiner et al.~\cite{finkbeiner2022specification} adapt
this idea to synthesis, where the dependency analysis is based on
controllable variables, which makes the decomposition impossible when
the requirements that form the specification share many system
(controlled) variables.
We propose an alternative approach for dependency analysis in the
context of system specification, by leveraging the concept of
\textit{mode} to break down a specification into smaller components.
This approach is a common practice in Requirements Engineering
(\textit{RE})\changing{~\cite{heitmeyer1995consistency,heitmeyer2011requirements}
  where specifications typically contain a high-level state machine
  description (where states are called modes) and most requirements
  are specific to each mode.
Furthermore, this approach finds widespread application in various
industries, employing languages such as \emph{EARS}~\cite{EARS} and
\emph{NASA}'s \emph{FRET} language~\cite{FRET}.}
Recently, a notion of \textit{context} is introduced by Mallozi et
al~\cite{mallozzi2022contractbased} in their recent work on
\textit{assume-guarantee} contracts.
Unlike modes, contexts depend solely on the environment and are not
part of the elicitation process or the system specification.

Software Cost Reduction
(SCR)~\cite{heninger1978software,heitmeyer1995consistency,heitmeyer2011requirements}
is a well-establish technique that structures specifications around
\textit{mode classes} and \textit{modes}.
A mode class refers to internally controlled variables that maintain
state information with a set of possible values known as modes.

We use modes here provided by the user to accelerate synthesis,
exploiting that in RE modes are comonly provided by the engineer
during system specification.
Recently, Balachander et al.~\cite{balachander23ltl} proposed a method
to assist the synthesis process by providing a sketch of the desired
Mealy machine, which can help to produce a system that better aligns
with the engineer's intentions.
This approach is currently still only effective for small systems, as
it requires the synthesis of the system followed by the generation of
example traces to guide the search for a reasonable solution.
In contrast our interest is in the decomposition of the synthesis
process in multiple synthesis sub-tasks.

Other compositional synthesis approaches aim to incrementally add
requirements to a system specification during its
design~\cite{kupferman2006safraless}.
On the other hand,~\cite{filiot2010compositional}
and~\cite{finkbeiner2022specification} rely extensively on dropping
assumptions, which can restrict the ability to decompose complex
real-world specifications.
%




\section{Motivating Example}
\label{sec:running}

We illustrate the main ideas of our decomposition technique using the
following running example of a counter machine (\textit{CM}) with a
reset.
The system must count the number of ticks produced by an external
agent, unless the reset is signaled---also by the environment---in
which case the count is restarted.
When the count reaches a specific limit, the count has to be restarted as
well and an output variable is used to indicate that the bound has
been reached.
Fig.~\ref{lst:counter} shows a specification for this system with a
bound of $20$.
\changing{This example is written in TLSF (see~\cite{TLSF}), a
  well-established specification language for reactive synthesis,
  which is widely used as a standard language for the synthesis
  competition, \textit{SYNTCOMP}~\cite{SYNTCOMP}.}
\begin{figure}[b!]
\begin{lstlisting}[language=TLSF]
PARAMETERS { N = 20;}
INPUTS {reset;start;} OUTPUTS {counter[N+1];trigger;}
INITIALLY{ (!reset && !start);} ASSUMPTIONS{ G !(reset && start);}
PRESET{counter[0] && (&&[1 <= i <=N]!counter[i]);}
DEFINITIONS {
  mutual(b) = G ||[0 <= i < n](b[i] && &&[j IN {0, 1 .. (n-1)} (\) {i}] !b[j]);}}
GUARANTEES
  mutual(counter); G (reset -> X counter[0]);      
  G ((counter[0] &&  start) ->  X (counter[1] || reset));
  G ((counter[1] && !reset) ->  X (counter[2] || reset));
  ...
  G ((counter[N-1] && !reset) ->  X (counter[N] || reset));
  G (counter[N] -> X counter[0]);
  G (counter[N] -> trigger);  G (!counter[N] -> !trigger);
  \end{lstlisting}
  \caption{Counter machine specification.}
  \label{lst:counter}
\end{figure}
Even for this simple specification, all state-of-the-art synthesis
tools from the synthesis competition
\textit{SYNTCOMP}~\cite{SYNTCOMP}, including
\textit{Strix}~\cite{meyer2018strix}, are unable to produce a system
that satisfies \CM.

%
Recent decomposition
techniques~\cite{finkbeiner2022specification,Ianopollo2018} construct
a dependency graph considering controllable variable relationships,
but fail to decompose this specification due to the mutual
dependencies among output variables.
Our technique breaks down this specification into smaller
sub-specifications, grouping the counter machine for those states with
counter value $1$ and $2$ in a mode, states with counter $3$ and $4$
in a second mode, etc, as follows:

%
{
\centering
\includegraphics[scale=.90]{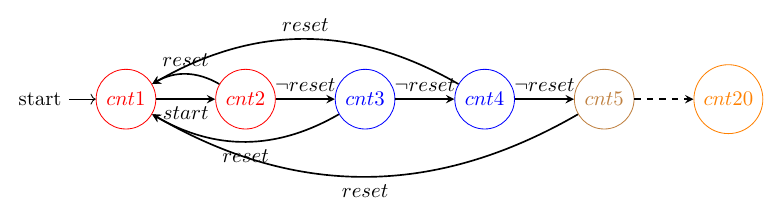}
}
%

\noindent Smaller controllers are synthesized independently, which can
be easily combined to satisfy the original
specification~\ref{lst:counter}.
In the example, we group the states in pairs for better readability,
but it is possible to use larger sizes.
In fact, for $N=20$ the optimal decomposition considers modes that
group four values of the counter (see Section~\ref{sec:empirical}).
The synthesis for each mode is efficient because in a given mode we
can ignore those requirements that involve valuations that belong to
other modes, leading to smaller specifications.

%
%

\begin{figure}[b!]
\begin{lstlisting}[language=TLSF, mathescape=true, label={lst:projections}]
// common part to all projections.
INPUTS {reset;start;} INITIALLY (!reset && !start); ASSUMPTIONS G !(reset && start);

[Projection under m1]
OUTPUTS {counter_0, counter_1; trigger; $\emph{jump}_2$; $s_{\X\varphi}$; $\emph{done}$}
GUARANTEES
  G (!$\emph{done}$ -> (counter_0 || counter_1));
  G (!$\emph{done}$ -> (reset -> X counter_0));
  G (!$\emph{done}$ -> (counter_0 && start) -> X (counter_1 || reset));
  G (!$\emph{done}$ -> ((counter_1 && !reset) -> $s_{\X\varphi}$));
  G (!$\emph{done}$ -> ($s_{\X\varphi}$ && !$\emph{done}$) -> X FALSE);
  G (!$\emph{done}$ -> !trigger);
  G ($\emph{done}$ -> X $\emph{done}$);
  G ($\emph{jump}_2$ -> X $\emph{done}$);
  G (!$\emph{jump}_2$ -> (!$\emph{done}$ -> X !$\emph{done}$));

  [Projection under m2]
  ...
  \end{lstlisting}
  \caption{Counter-Machine projection.}
  \label{lst:projections}
\end{figure}




In this work, we refer to these partitions of the state space as
modes.
In requirements engineering (\textit{RE}) it is common practice to
enrich reactive LTL specifications with a state transition system
based on modes, which are also used to describe many constraints that
only apply to specific modes.

Software cost reduction (SCR) uses modes in specifications and has
been successfully applied in requirements for safety-critical systems,
such as an aircraft's operational flight
program~\cite{heninger1978software}, a submarine's communication
system~\cite{heitmeyer1983abstract}, nuclear power
plant~\cite{van1993documentation}, among
others~\cite{bharadwaj1997applying,kirby1987example}.
SCR has also been used in the development of human-centric decision
systems~\cite{heitmeyer2015building}, 
and event-based transition
systems derived from goal-oriented requirements
models~\cite{letier2008deriving}.

%
Despite the long-standing use of modes in SCR, state-of-the-art
reactive synthesis tools have not fully utilized this concept.
The approach that we introduce in this paper exploits mode
descriptions to decompose specifications significantly reducing
synthesis time.
For instance, when decomposing our motivating example \CM using modes,
we were able to achieve 90\% reduction in the specification size,
measured as the number of clauses and the length of the specification
(see Section~\ref{sec:empirical}).
Fig.~\ref{lst:projections} shows the projections with a bound $N = 4$ for
mode $m_1 = (counter_0 \lor counter_1)$.
\changing{In each sub-specification, we introduce new variables
  (controlled by the system).
  These variables encode mode transitions using $\jump$ variables.
  When the system transitions to a new mode, the current
  sub-specification automatically wins the ongoing game, encoded by
  the \emph{done} variable
  A new game will start in the arriving mode.
  Furthermore, the system can only jump to new modes if the arriving
  mode is prepared, i.e., if its initial conditions---as indicated by
  the $s_{\X\varphi}$ variables---can satisfy the pending obligations.
  The semantics of these variables is further explained in the next
  section.}

%
%
In this work, we assume that the initial conditions are also provided
manually as part of the mode decomposition.
While modes are common practice in requirement specification, having
to manually provide initial conditions is the major current technical
drawback of our approach.
We will study in the future how to generate these initial conditions
automatically.
\changing{In summary, our algorithm receives the original
  specification $\contract$, a set of modes and their corresponding
  initial conditions.
  Then, it generates a sub-specification for each mode} and discharges
these to an off-the-self synthesis tool to decide their realizability.
If all the sub-specifications are realizable, the systems obtained are
then composed into a single system for the original specification,
which also shares the structure of the mode decomposition.


\section{Preliminaries}
\label{sec:preliminaries}
We consider a finite set of $\AP$ of atomic propositions.
Since we are interested in reactive systems where there is an ongoing
interaction between a system and its environment, we split $\AP$ into
those propositions controlled by the environment $\VX$ and those
controlled by the system $\VY$, so $\VX\cup\VY=\AP$ and
$\VX\cap\VY=\emptyset$.
The alphabet induced by the atomic propositions is $\Sigma=2^\AP$.
We use $\Sigma^*$ for the set of finite words over $\Sigma$ and
$\Sigma^\omega$ for the set of infinite words over $\Sigma$.
Given $\sigma\in \Sigma^\omega$ and $i\in\Nat$, $\sigma(i)$ represents
the element of $\sigma$ at position $i$, and $\sigma^i$ represents the
word $\sigma'$ that results by removing the prefix
$\sigma(0)\ldots\sigma(i-1)$ from $\sigma$, that is $\sigma'$
s.t. $\sigma'(j)=\sigma(j-1)$ for $j\geq i$.
Given $u\in \Sigma^*$ and $v\in \Sigma^\omega$, $uv$ represents the
$\omega$-word that results from concatenating $u$ and $v$.
We use LTL~\cite{pnueli77temporal,manna95temporal} to describe
\changing{specifications. The syntax of LTL is the following:}
\[
  \varphi  ::= \true \DefOR a \DefOR \varphi \lor \varphi \DefOR \neg \varphi
  \DefOR \Next \varphi \DefOR \varphi \U \varphi \DefOR \G \varphi
\]
\changing{where $a\in \AP$, and $\lor$, $\land$ and $\neg$ are the usual Boolean
  disjunction, conjunction and negation, and $\Next$ is the next temporal operator
  (a common derived operator is \false = $\Not \true$).
}
%
%
%
A formula with no temporal operator is called a Boolean formula, or
predicate.
%
%
%
We say $\varphi$ is in negation normal form (\NNF), whenever all
negation operators in $\varphi$ are pushed only in front of atoms
using dualities.
The semantics of \LTL associate traces $\sigma\in\Sigma^\omega$ with
formulae as follows:
 \[
   \begin{array}{l@{\hspace{1em}}l@{\hspace{0.3em}}l}
     \sigma \models \true && \text{always holds} \\
     \sigma \models a & \text{iff } & a \in\sigma(0) \\
     \sigma \models \varphi_1 \Or \varphi_2 & \text{iff } & \sigma\models \varphi_1 \text{ or } \sigma\models \varphi_2 \\
     \sigma \models \neg \varphi & \text{iff } & \sigma \not\models\varphi \\
     \sigma \models \Next \varphi & \text{iff } & \sigma^1\models \varphi \\
      \sigma \models \varphi_1 \U \varphi_2 & \text{iff } & \text{for some } i\geq 0\;\; \sigma^i\models \varphi_2, \text{ and } \text{for all } 0\leq j<i, \sigma^j\models\varphi_1 \\
     \sigma \models \G \varphi & \text{iff } & \text{for all } i\geq 0\;\; \sigma^i\models \varphi\\
   \end{array}
 \]

 \subsubsection{A Syntactic Fragment for Safety.}
\changing{A useful fragment of \LTL is \LTLX where formulas only
  contain $\X$ as a temporal operator.
In this work, we focus on a fragment of \LTL we called \GXz:
\[
  \alpha \Into (\beta \And \Always \psi)
\]
where $\alpha$, $\beta$ and $\psi$ are in \LTLX.

This fragment can only express safety
properties~\cite{manna95temporal,chang05characterization} and includes
a large fragment of all safety properties expressible in LTL.
This format is supported by tools like Strix~\cite{meyer2018strix}
and is convenient for our reactive problem specification.}

%
%
%
%
%
%
 
\changing{
\begin{definition}[Reactive Specification]
  \label{def:react}
  A reactive specification $\contractfull$ is given by
  $A=(\IniE,\AssE)$ and $G=(\IniS,\AssS)$ (all $\LTLX$ formulas),
  where $\IniE$ and $\IniS$ are the initial conditions of the
  environment and the system, and $\AssE$ and $\AssS$ are called
  assumptions and guarantees. The meaning of $S$ is the $\GXz$
  formula:
  \[
    (\IniE \rightarrow (\IniS \land \G(\AssE \rightarrow\AssS)))
  \]
\end{definition}

\noindent In TLSF $\IniE$ and $\IniS$ are represented as
\emph{INITIALLY} and \emph{PRESET}, resp.}

 %
 %
 %
%
%
%
%
%


\subsubsection{Reactive Synthesis.}
Consider a specification $\varphi$ over $\AP = \VX \cup \VY$.
A trace $\sigma$ is formed by the environment and the system choosing
in turn valuations for their propositions.
The specification $\varphi$ is realizable with respect to $(\VX, \VY)$
if there exists a strategy
$g: (2^\VX)^{+} \rightarrow 2^\VY$ such that for
an arbitrary infinite sequence
$X = X_0,X_1,X_2,\ldots \in (2^\VX)^{\omega}$, $\varphi$ is
\true{} in the infinite trace
$\rho = (X_0 \cup g(X_0)),(X_1 \cup g(X_0,X_1)),(X_2 \cup
g(X_0,X_1,X_2)),\ldots$
A play $\rho$ is \textit{winning} (for the system) if
$\rho \models \varphi$.

Realizability is the decision problem of whether a specification has a
winning strategy, and synthesis is the problem of computing one wining
system (strategy).
Both problems can be solved in double-exponential time for an
arbitrary LTL formula~\cite{PnueliRosner1989}.
If there is no winning strategy for the system, the specification is
called \textit{unrealizable}.
In this scenario, the environment has at least one strategy to falsify
$\varphi$ for every possible strategy of the system.
Reactive safety synthesis considers reactive synthesis for safety formulas.

We encode system strategies using a deterministic Mealy machine
$W=(Q,s,\delta,L)$ where $Q$ is the set of states, $s$ is the initial
state, $\delta:Q\times{}2^\VX\Into{}Q$ is the transition function that
given valuations of the environment variables it produces a successor
state and $L:Q\times{}2^\VX\Into{}2^\VY$ is the output labeling that
given valuations of the environment it produces valuations of the
system.
\changing{The strategy $g$ encoded by a machine
$W:(Q,s,\delta,L)$ 
is as follows:}
\begin{itemize}
\item if $e\in 2^\VX$, then $g(e) = L(s,e)$
\item if $u\in (2^\VX)^+$ and $e\in 2^\VX$ then $g(ue) = L(\delta^*(s,u),e)$ where
  $\delta^*$ is the usual extension of $\delta$ to $(2^\VX)^*$.
\end{itemize}
It is well known that if a specification is realizable then there is
Mealy machine encoding a winning strategy for the system.

\section{Mode Based Synthesis}
\label{sec:solution}

We present now our mode-based solution to reactive safety synthesis.
The starting point is a \emph{reactive specification} as a \GXz
formula written in TLSF.
%
\changing{We define a mode $m$ as a predicate over $\VX\cup\VY$, that is $m \in 2^{\VX\cup\VY}$}.
A mode captures a set of states of the system during its execution.
Given a trace $\sigma=s_0,s_1,\ldots$, if $s_i\models m$ we say that
$m$ is the \emph{active mode} at time $i$.
In this paper, we consider mutually exclusive modes, so only one mode
can be active at a given point in time.
As part of the specification of synthesis problems the requirement
engineer describes the modes $M = \{m_1,\ldots,m_n\}$, partially
expressing the intentions of the structure of the intended system.
A set of modes $M = \{m_1, m_2, \ldots, m_n\}$ is legal if it
partitions the set of variable valuations, that is:
\begin{compactitem}
\item \textbf{Disjointness}:
  for all $i\neq j$,  $(m_i \Impl \Not m_j)$ is valid.
\item \textbf{Completeness}: $\bigvee_{i} m_i$ is valid.
\end{compactitem}

\changing{Within a trace $\sigma$ there may be instants during
  execution there are transitions between modes.  We will refer to the
  modes involved in this transition as \emph{related modes}. Formally:
\begin{definition}[Related Modes]
  Consider a trace $\sigma = \sigma(0)\sigma(1)\sigma(2)\ldots$ and
  two modes $m_1, m_2 \in M$.  We say that $m_1$ and $m_2$ as related,
  denoted as $m_1 \prec m_2$ if, at some point $i$:
  $(\sigma(i) \models m_1)$ and $(\sigma({i+1}) \models m_2)$.
\end{definition}
}

A key element of our approach is to enrich the specification of the
synthesis sub-problem corresponding to mode $m_i$ forbidding the
system to jump to another mode $m_j$ unless the initial condition of
mode $j$ satisfying the pending ``obligations'' at the time of
jumping.
To formally capture obligations we introduce fresh variables for
future sub-formulas that appear in the specification.
\begin{definition}[\textit{Obligation Variables}]
  \label{def:ov}
  For each sub-formula $\Next\psi$ in the specification, we introduce
  a fresh variables $\sv{\Next\psi}$ to encodes that the system is
  obliged to satisfy $\psi$.
\end{definition}
These variables will be controlled by the system and their dynamics
will be captured by $s_{\Next\psi} \Into \Next\psi$
%
introduced in every mode (unless the system leaves the mode, which
will be allowed only if the arriving system satisfy $\psi$).
%
%
\changing{These variables are similar to temporal
  testers~\cite{pnueli06psl} and allow a simple treatment of
  obligations that are left pending after a mode jump}.
We also introduce variables $\jump_j$ which will encode (in the game
and sub-specification corresponding to mode $m_i$) whether the system
decides to jump to mode $m_j$ (see Alg.~\ref{alg:projection}
below).
\subsection{Mode Based Decomposition}
\label{sec:lvp}

We present now the \mobi algorithm, which decomposes a
\changing{reactive specification} $\contract$ into a set of (smaller)
specifications $\Pi = \{\contract_1,\ldots,\contract_n\}$, using the
provided system modes $M = \{m_1,\ldots,m_n\}$ and initial
mode-conditions $I = \{I_1,\ldots, I_n\}$.
Fig.~\ref{fig:approach} shows an overview of \mobi.
\changing{Particularly, \mobi receives a specification together with
  modes and one initial condition per mode.  The algorithm decompose
  the specification into smaller sub-specifications one per mode.}
%
%
\begin{figure}[t!]
\centering
\includegraphics[width=\textwidth]{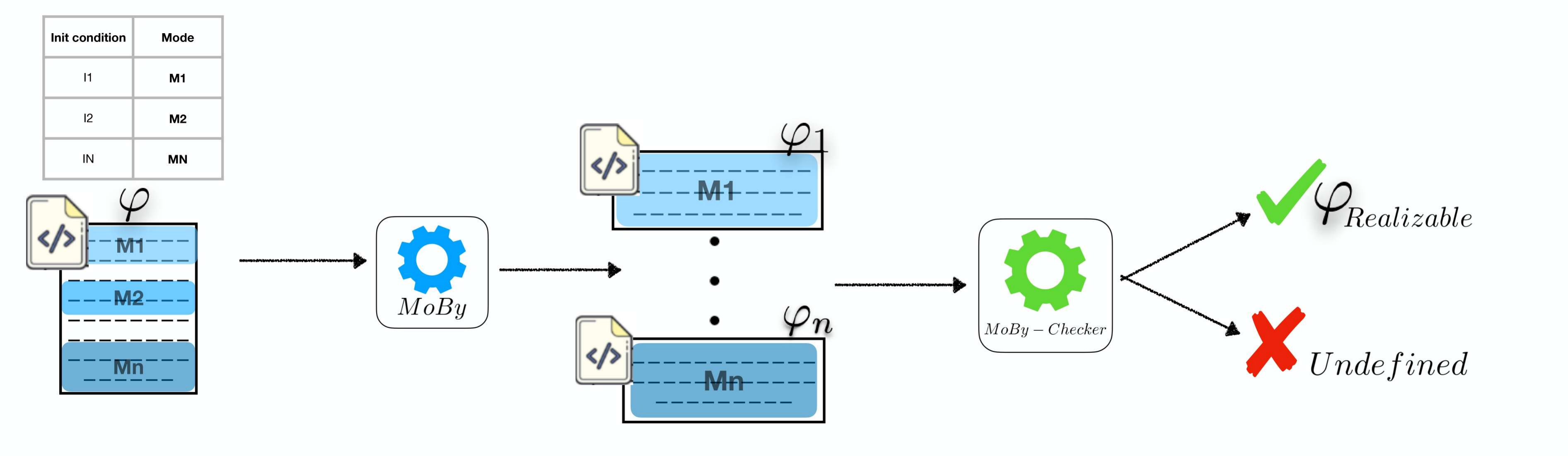}
\caption{Overview of \mobi}
\label{fig:approach}
\end{figure}
\newcommand{\AlgoRemoveModes}{
  \begin{algorithm}[H]
    \centering
    \footnotesize
    \begin{algorithmic}[1]
      \Function{\RmModes}{$\varphi,m$}
      \For {each $f \in \ASF(\varphi)$}
      \If {$(m \Impl f)$ is valid}
      \State $\varphi \leftarrow \varphi[f \backslash \TRUE]$
      \EndIf
      \If{$(m \Impl\Not{}f)$ is valid}
      \State $\varphi \leftarrow \varphi[f \backslash \FALSE]$
      \EndIf
      \EndFor
      \State \textbf{return} $\Call{\Simpl}{\varphi}$
      \EndFunction
    \end{algorithmic}
    \caption{Simplify (remove)}
    \label{alg:auxf}
  \end{algorithm}
}
\newcommand{\AlgoAddJumps}{
        \begin{algorithm}[H]
            \centering
            \footnotesize
            \caption{Add jumps}
            \label{alg:auxf2}
            \begin{algorithmic}[1]
                \Function{add\_jmp}{$m, \varphi$}
                \State $B=\{f\in\NSF(\varphi)\;|\;m\Impl\neg f \text{ is valid}\}$
                \For{$g\in B$}
                \State $\varphi\leftarrow\varphi[\X f \backslash S_{\X m}]$
                \EndFor
                \State \textbf{return} $\varphi$
                \EndFunction
            \end{algorithmic}
          \end{algorithm}
}

The main result is that the decomposition that \mobi performs
guarantees that if each projection $S_i \in \Pi$ is realizable then
the original specification is also realizable, and that the systems
synthesized independently for each sub-specification can be combined
into an implementation for the original specification $\contract$
\changing{(See Lemma~\ref{lemma:moby2} and Corollary~\ref{cor:mobi})}.

We first introduce some useful notation before presenting the main
algorithm.
We denote by $\varphi[\phi \backslash \psi] $ the formula that is
obtained by replacing in $\varphi$ occurrences of $\phi$ by $\psi$.
We assume that all formulas have been converted to \NNF, where $\Next$
operators have been pushed to the atoms.
It is easy to see that a formula in \NNF is a Boolean combination of
sub-formulas of the form $\Next^i p$ where $p\in \AP$ and sub-formulas
$\psi$ that do not contain any temporal operator.
We use some auxiliary functions:
\begin{compactitem}
\item The first function is $\ASF(\varphi)$, which returns the set of
  sub-formulas $\psi$ of $\varphi$ such that (1) $\psi$ does not contain
  $\X$ (2) $\psi$ is either $\varphi$ or the father formula of
  $\psi$ contains $\Next$.
  We call these formulas maximal next-free sub-formulas of $\varphi$.
\item The second function is $\NSF(\varphi)$, which returns the set of
  sub-formulas $\psi$ such that (1) the root symbol of $\psi$ is $\X$
  and (2) either $\psi$ is $\varphi$, or the father of $\psi$ does not
  start with $\Next$.
  It is easy to see that all formulas returned by $\NSF$ are of the
  form $\Next^i p$ for $i>0$, and indeed are the sub-formulas of the form
  $\Next^i p$ not contain in other formulas other sub-formulas of these forms.
  We call these formulas the maximal next sub-formulas of $\varphi$.
\end{compactitem}
For example, let $\varphi = \X p \rightarrow (\X q \land r)$, which is
in \NNF.
$\ASF(\varphi)=\{r\}$, as $r$ is the only formula that does not
contain $\X$ but its father formula does.
$\NSF(\varphi)=\{\X p, \X q\}$.
We also use the following auxiliary functions:
\begin{compactitem}
\item $\Simpl(\varphi)$, which performs simple Boolean
  simplifications, including $\true \And \varphi \mapsto \varphi$,
  $\false \And \varphi \mapsto \false$,
  $\true \Or \varphi \mapsto \true$,
  $\false \Or \varphi \mapsto \varphi$, etc.
\item $\RmNext$, which takes a formula of the form $\X^{i} \varphi$
  and returns $\X^{i-1} \varphi$.
\item $\Fresh$, which takes a formula of the form $\X^{i} \varphi$ and
  returns the obligation variable $s_{\X^{i} \varphi}$. This function
  also accepts a proposition $p\in \AP$ in which case it returns $p$
  itself.
\end{compactitem}
The output of $\Simpl(\varphi)$ is either $\true$ or $\false$, or a
formula that does not contain $\true$ or $\false$ at all.
The simplification performed by $\Simpl$ is particularly useful

\begin{wrapfigure}[9]{l}{0.42\textwidth}
\begin{minipage}[t]{0.42\textwidth}
    \vspace{-4.5em}
    \AlgoRemoveModes
  \end{minipage}
\end{wrapfigure}
\noindent%
simplifying $(\false \Into \psi)$ to $\true$, because given a
requirement of the form $C\Into{}D$, if $C$ is simplified to $\False$
in a given mode then $C\Into D$ will be simplified to $\true$ ignoring
all sub-formulas within $D$.
We introduce $\RmModes(\varphi,m)$ on the left, which given a mode
$m$ and a formula $\varphi$ simplifies $\varphi$ under the assumption
that the current state satisfies $m$, that is, specializes $\varphi$
for mode $m$.\\
\begin{example}
  Consider $m_1 :(\Counter_1\And\Not\Counter_2)$, and
  $\varphi_1 :\lnot \Counter_2 \rightarrow \lnot \Trigger$ and
  $\varphi_2 : (\Counter_1 \land \lnot \RESET) \rightarrow \X
  (\Counter_2 \lor \RESET)$.
Then, 
\begin{align*}
  \RmModes(\varphi_1,m_1)&=\lnot \Trigger\\
  \RmModes(\varphi_2,m_1)&=\lnot \RESET \rightarrow \X (\Counter_2 \lor \RESET)
\end{align*}
Finally, $\Fresh(\X\psi)=s_{\X\psi}$.
\end{example}

\subsection{The Mode-Base Projection Algorithm \mobi}
\newcommand{\AlgoMobi}{
  \begin{algorithm}[b!] 
    \caption{\mobi: Mode-Based Projections.}
    \begin{algorithmic}[1]
      \State \textbf{Inputs:} $S:(A,G), M : \{m_1,\ldots,m_n\}, I :\{I_1,\ldots,I_n\}$.
      \State \textbf{Outputs:} $\PR = [\Pi_1,\ldots,\Pi_n]$.
      \Function{\ComputeProjections}{$S,M,I$}
      \For {each mode index $i\in\{1\ldots{}n\}$}
      \State $G'\leftarrow \Reduce(G,m_i)$
      \State $\Oblig \leftarrow \changing{\NSF(G')}$ 
      \State $G_i=\{ \neg\done\Then m_i \}$
      \For {each requirement $\psi \in G'$}
     \State $\psi'\leftarrow$ replace $f$ for $\Fresh(f)$ in $\psi$ (for all $f\in \Oblig$)
      \State $G_i.\Add((\neg\done)\Into\psi')$
      \EndFor
      \For {each obligation subformula $f\in\Oblig$} 
      \State $G_i.\Add((\neg\done \And \Fresh(f))\Into\Next\Fresh(\RmNext(f)))$ 
      \EndFor
      \For {each mode $j\neq i$ such that $m_i \prec m_j$ and for every $f\in\Oblig$} 
       \If{$(I_j\Into \RmNext(f))$ is not valid}
       \State $G_i.\Add(\jump_j\Then \neg \Fresh(f))$
       \EndIf
       \EndFor
       \State $G_i.\Add(\done \Then \Next\done)$
       \State $G_i.\Add(\phantom{\Not}(\,\bigvee_j \jump_j) \Then \Next\done)$
       \State $G_i.\Add((\Not\bigvee_j \jump_j)\Then(\Not\done\Then\Next\Not\done))$
       \State $G_i.\Add(\bigwedge_{j\neq k} \jump_j \Then \neg\jump_k)$
      \State $\PR[i] \leftarrow (A,G_i)$
      \EndFor
      \State \textbf{return} $\PR$
      \EndFunction
      \Function{\Reduce}{$\Phi,m$}
      \State \textbf{return} $\{ \RmModes(\varphi,m)  \;|\; \varphi\in\Phi \}$
      \EndFunction
    \end{algorithmic}
    \label{alg:projection}
  \end{algorithm}
}

\newcommand{\AlgoMobiOld}{
\begin{algorithm}[t!] 
    \caption{\mobi: Mode-Based Projections.}
    \begin{algorithmic}[1]
        \State \textbf{Inputs:} $\contractfull, \mathcal{M} = \{m_1,\ldots,m_k\}, \mathcal{I} = \{i_1,\ldots,i_n\}$.
        \State \textbf{Outputs:} $Pr = [\Pi_1,\ldots,\Pi_k]$.
        \Function{increasing\_phase}{$Pr, \mathcal{M}$}
            \For {each index mode $(i,j) i \neq j \in \mathcal{M}$}
                \If{$\mathcal{M}.get(i) \not\prec \mathcal{M}.get(j)$} \Comment{Modes are not related}
                    \State \textbf{continue}
                \EndIf
                \State \textit{curr\_mode} $\leftarrow Pr[i]$ 
                \State $fresh\_cmode \leftarrow \emptyset$ \Comment{Set of fresh variables for current mode}
                \For {each requirement $\psi \rightarrow \varphi \in$ \textit{curr\_mode}}
                    \State $G' \leftarrow (\psi \rightarrow \varphi)$                            
                    \State $\varphi' \leftarrow \Call{rm\_modes}{curr\_mode,\varphi}$
                    \If{$\Call{deep}{\varphi} == 0 \lor \varphi' \not= \FALSE$} \Comment{same mode or not related to any}
                        \State \textbf{continue}    
                    \EndIf
                    \While{$\lnot finished$}
                        \State $v, ant \leftarrow \Call{fresh}{\varphi}$
                        \State $fresh\_cmode.add(v)$
                        \If{$\Call{deep}{\varphi} == 1$}
                                \State $G'[\varphi \backslash v]$
                                \State $G'.add(ant \land \lnot jump_j \rightarrow \Call{rm\_modes}{curr\_mode,\varphi})$
                                \State $finished \leftarrow \TRUE$
                        \Else
                            \State $G'[\varphi \backslash \X(v)]$
                            \State $\varphi \leftarrow \Call{rm\_nx}{\varphi}$
                            \State $G'.add(ant \land \lnot jump_j \rightarrow \varphi)$
                        \EndIf
                    \EndWhile
                \EndFor
                \State $fresh\_cmode \leftarrow filter (\lambda x . (I[j] \rightarrow \lfloor x \rfloor$ is not valid) ($fresh\_cmode$)
                \State $G'.add(jump_j \rightarrow ((\bigwedge\limits_{i=0}^{size(fresh\_cmode)} \lnot fresh\_cmode[i]) \land jump))$
                \State $Pr[i].G \leftarrow G' \cup (\G jump \rightarrow \X \G jump)$
            \EndFor
        \EndFunction
        \Function{reduction\_phase}{$\varphi, mode$}
            \State $\varphi' \leftarrow \emptyset$
            \For {each requirement $r \in \varphi$}
                \State $r \leftarrow \Call{rm\_Modes}{mode,r}$
                \State $\varphi' \leftarrow \varphi' \cup \{r\}$
            \EndFor
            \State \textbf{return} $\varphi'$ 
        \EndFunction
        \Function{MoBy}{}
            \State $Pr \leftarrow \emptyset$
            \For {each mode $m \in \mathcal{M}$}
                \State $G' \leftarrow \emptyset$
                \For {each requirement $r \in G$}
                    \State $r \leftarrow \Call{reduction\_phase}{m,r}$
                    \State $G' \leftarrow G' \cup \{\G \lnot jump \rightarrow r\}$
                \EndFor
                \State $Pr \leftarrow Pr.append(\contract_m = (A,G'))$
            \EndFor
            \State \Call{increasing\_phase}{$Pr,\mathcal{M}$}
            \State \textbf{return} $Pr$
        \EndFunction
    \end{algorithmic}
    \label{alg:projections}
\end{algorithm}
}
\changing{As mentioned before our algorithm takes as a input a reactive specification $\contract$}
an indexed set $M = \{m_1,\ldots, m_n\}$ of modes and an indexed set 
$I = \{I_1,\ldots,I_n\}$ of initial conditions, one for each mode.
We first add to each $I_i$ the predicate $\Not\done$, to encode that
in its initial state a sub-system that solves the game for mode $m_i$
has not jumped to another mode yet.
For each mode $m_i$, \mobi specializes all guarantee formulas calling
$\RmModes$, and then adds additional requirements for the obligation
variables and to control when the system can exit the mode.
Alg.~\ref{alg:projection} presents \mobi in pseudo-code.

\AlgoMobi

Line $5$ simplifies all requirements specifically for mode $m_i$, that
is, it will only focus on solving all requirements for states that
satisfy $m_i$.
Line $7$ starts the goals for mode $i$ establishing that unless the
system has jumped to another mode, the mode predicate $m_i$ must hold
in mode $i$.
Lines $8$ to $10$ substitute all temporal formulas in the requirements
with their obligation variables, establishing that all requirements
must hold unless the system has left the mode.
Lines $11$ to $12$ establish the semantics of obligation variables,
forcing their temporal behavior as long as the system stays within the
mode ($\neg\done$).
Lines $13$ to $15$ precludes the system to jump to another mode $m_j$
if $m_j$ cannot fulfill pending promises.
Lines $16$ to $18$ establish that once the system has jumped the game
is considered finished, and that the system is only finished jumping
to some other mode.
Finally, line $19$ limits to jump to at most one mode.

\begin{example}
  We apply \mobi to the example in Fig.~\ref{lst:counter} for $N=2$,
  with three modes
  $M = \{m_1:\{counter_0\}, m_2:\{counter_1\}, m_3:\{counter_2\}\}$.
  The initial conditions only establish the variable of the mode is
  satisfied $I_1=m_1$, $I_2=m_2$, $I_3=m_3$ (only forcing $\neg\done$
  as well).
  The \mobi algorithm computes the following  projections:

\begin{lstlisting}[language=TLSF, mathescape=true]
INPUTS reset; start;
ASSUMPTIONS G !(reset && start); INITIALLY (!reset && !start) || reset
[Projection_1]                                                        
OUTPUTS counter_0; trigger; $s_{\X\varphi}$; $\emph{jump}_2$; $\emph{done}$
GUARANTEES                                                             
G (!$\emph{done}$ -> (counter_0))                                                    
G (!$\emph{done}$ -> (reset -> X counter_0));                                        
G (!$\emph{done}$ -> (start -> $s_{\X\varphi}$));                                     
G (!$\emph{done}$ -> (($s_{\X\varphi}$ && !$\emph{done}$) -> X FALSE));               
G (!$\emph{done}$ -> (!trigger));                                                     
G ($\emph{done}$ -> X $\emph{done}$);                                                 
G ($\emph{jump}_2$ -> X $\emph{done}$);                                               
G (!$\emph{jump}_2$ -> (!$\emph{done}$ ->  X !$\emph{done}$));                        
                                                                                     
[Projection_2]                                                  
OUTPUTS counter_1; trigger; $\emph{jump}_1$, $\emph{jump}_3$ $s_{\X\varphi}$; $s_{\X\varphi_1}$;                                       
GUARANTEES                                                      
G !$\emph{done}$ -> (counter_1)                                 
G !$\emph{done}$ -> (reset -> $s_{\X\varphi}$);                         
G !$\emph{done}$ -> ($s_{\X\varphi}$ && !$\emph{done}$ -> X FALSE);                    
G !$\emph{done}$ -> (!reset -> $s_{\X\varphi_1}$);                        
G !$\emph{done}$ -> ($s_{\X\varphi_1}$ && !$\emph{done}$ -> X FALSE);                 
G !$\emph{done}$ -> (!trigger);                               
G (($s_{\X\varphi}$ || $s_{\X\varphi_1}$) -> X $\emph{done}$);                        
G $\emph{jump}_1$ -> !$s_{\X\varphi_1}$;
G $\emph{jump}_3$ -> !$s_{\X\varphi}$;
G (!($s_{\X\varphi}$ || $s_{\X\varphi_1}$)  -> (!$\emph{done}$ ->  X !$\emph{done}$));

[Projection_3]
OUTPUTS counter_2; trigger; $\emph{jump}_1$; $s_{\X\varphi}$ $\emph{jump}_1$
GUARANTEES
G (!$\emph{done}$ -> (counter_2))
G (!$\emph{done}$ -> (reset -> $s_{\X\varphi}$))
G (!$\emph{done}$ -> ($s_{\X\varphi}$ && !$\emph{done}$ -> X FALSE));
G (!$\emph{done}$ -> (counter_2 -> $s_{\X\varphi}$));
G (!$\emph{done}$ -> (trigger)); 
G ($\emph{jump}_1$ -> X $\emph{done}$);
G (!$\emph{jump}_1$ -> (!$\emph{done}$ ->  X !$\emph{done}$));
\end{lstlisting}
\end{example}      
    

%
%
\subsection{Composing solutions}
After decomposing $\contract$ into a set of projections
$\PR = \{\Pi_1,\ldots, \Pi_n\}$ using \mobi, Alg.~\ref{alg:moby2}
composes winning strategies for the system obtained for each mode into
a single winning strategy for the original specification $\contract$.
\begin{algorithm}[b!]
\caption{Composition of Winning Strategies}
\label{alg:moby2}
\begin{algorithmic}[1]
  \State \textbf{Input:} A winning strategy
  $W_i = (Q_i,s_i, \delta_i,L_i)$ for each projection $p_i\in Pr$.
  \State \textbf{Output:} A composed winning strategy $W = (Q,s,\delta,L)$.
  \Function{\Compose}{$W_1,\ldots,W_n$}
  \State $Q \leftarrow \bigcup_{i=1}^n Q_i$
  \State $s \leftarrow s_1$
  \State $\delta \leftarrow \emptyset$
  \For {each mode index $i \in \{1\ldots{}n\}$}
  \State $(Q_i,s_i,\delta_i,L_i)\leftarrow W_i$
\For {each $(q, a)\in Q_i\times 2^\VX$}
\State $L(q,a)\leftarrow{}L_i(q,a)$ 
\If {$\delta_i(s,a)\models\jump_j$ for some $j$}
\State $\delta(q,a)\leftarrow{}s_j$
\Else  \Comment{$\jump_j\notin\delta_i(q,a)$ for any $j$}
\State $\delta(q,a)\leftarrow\delta_i(q,a)$
\EndIf
\EndFor
\EndFor
\State \textbf{return} $W:(Q,s,\delta,L)$
      \EndFunction
\end{algorithmic}
\end{algorithm}

\begin{lemma}[Composition's correctness]
\label{lemma:moby2}
Let $M = \{m_1, \ldots, m_n\}$ and $I = \{I_1, \ldots, I_n\}$ be a set
of valid mode descriptions for a specification $\contract$, and let
$St = \{W_1, \ldots, W_n\}$ be a set of winning strategies for each
projection $p \in Pr = \{\Pi_1, \ldots, \Pi_n\}$ Then, the composed
winning strategy $W$ obtained using Alg.~\ref{alg:moby2} is a winning
strategy for $\contract$.
\end{lemma}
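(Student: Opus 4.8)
The plan is to show that for an arbitrary environment input sequence $X = X_0, X_1, X_2, \ldots \in (2^{\VX})^\omega$ (consistent with the environment initial condition $\IniE$ and maintaining $\AssE$), the trace $\rho$ induced by the composed strategy $W$ satisfies the $\GXz$ formula $\IniE \rightarrow (\IniS \land \G(\AssE \rightarrow \AssS))$. First I would track, along $\rho$, which sub-strategy $W_i$ is currently ``in control.'' The composition starts in $s_1$, i.e. mode $m_1$; the transition function $\delta$ of $W$ follows $\delta_i$ as long as the local sub-strategy does not assert any $\jump_j$, and redirects to $s_j$ the moment $\jump_j$ becomes true. So $\rho$ decomposes into consecutive \emph{segments} $\rho = u_1 u_2 u_3 \cdots$, where during segment $u_k$ the trace is governed by some $W_{i_k}$ and $m_{i_k}$ is the active mode (by line 7 of Alg.~\ref{alg:projection}, $\neg\done \Then m_i$, and $\done$ is false throughout a segment until the jump step). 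The segment boundaries are exactly the instants where a $\jump_j$ fires; by lines 16--19, after $\jump_j$ fires the local game has $\done$ forever after, and at most one $\jump$ is ever asserted at a time, so the successor mode is well-defined and the segmentation is unambiguous. Completeness of $M$ guarantees some mode is always active and disjointness guarantees it is unique, so the ``active mode'' is well-defined at every position of $\rho$.

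The core of the argument is a local-to-global claim: for each segment, the projection $\Pi_{i_k}$ being won by $W_{i_k}$ means every guarantee $\psi \in G$, \emph{specialized to mode $m_{i_k}$} via $\RmModes$, holds at every position inside that segment, \emph{provided} the obligation variables $s_{\X\psi}$ faithfully record pending next-obligations. Here I would use the key invariant maintained at each segment boundary: when $\jump_j$ fires at position $t$, lines 13--15 guarantee that for every obligation subformula $f$ pending at $t$ (i.e. with $s_{\Fresh(f)}$ true), either $I_j \Rightarrow \RmNext(f)$ is valid --- so the pending obligation is automatically discharged by the arriving mode's initial condition --- or else $\jump_j \Then \neg\Fresh(f)$ forbids the jump while $f$ is pending. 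Hence at the start of segment $u_{k+1}$ the state satisfies $I_{i_{k+1}}$ \emph{together with} all the pending $\X$-obligations inherited from segment $u_k$, which is exactly what the new sub-game needs (its initial condition was augmented with $\neg\done$, and the obligation-variable dynamics $(\neg\done \land \Fresh(f)) \Then \Next\Fresh(\RmNext(f))$ of lines 11--12 then propagate them correctly). A short induction on the position index then shows: (i) $\IniS$ holds at position $0$ (since $W_1$ wins $\Pi_1$ whose system initial condition contains $\IniS$ suitably specialized, plus $\neg\done$); and (ii) at every position, whichever guarantee $\AssS$ we must establish, its mode-specialization holds because the currently-in-control $W_{i_k}$ wins its projection, and the parts of $\AssS$ that $\RmModes$ replaced by $\TRUE$ were replaced precisely because $m_{i_k}$ entails them, while obligations crossing segment boundaries are preserved by the invariant above. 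We also use that $\RmModes$ only ever \emph{soundly} simplifies: $\varphi[f \backslash \TRUE]$ when $m \Rightarrow f$ and $\varphi[f \backslash \FALSE]$ when $m \Rightarrow \neg f$, so $m \land \RmModes(\varphi,m) \equiv m \land \varphi$, making the specialized guarantee equivalent to the real one in any state of mode $m$.

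The hard part will be handling obligations of nesting depth greater than one, i.e. subformulas $\X^i p$ with $i > 1$, across a jump: the arriving mode must not only satisfy the immediately-pending $\X$-obligations but keep satisfying the deeper ones for several steps, possibly itself jumping again before they are discharged. I expect to handle this by making the segment-boundary invariant carry the \emph{full} set of active obligation variables (at all depths) and arguing that lines 11--12 together with the $\Fresh$/$\RmNext$ bookkeeping reduce depth by one per step uniformly across mode changes, so a potential argument on the multiset of pending depths shows every obligation is eventually discharged and never violated --- this is where the assumption that $I_j \Rightarrow \RmNext(f)$ (rather than just $\RmNext(f)$ being consistent) is essential, since it must hold unconditionally in the arriving mode regardless of the environment. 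A secondary subtlety is the environment initial condition: the projections use $\IniE$ possibly weakened (the example shows \texttt{INITIALLY (!reset \&\& !start) || reset}), so I must check that any $X$ satisfying the original $\IniE$ also satisfies each projection's environment initial condition, which holds because the projection's $\IniE$ is implied by --- in fact, is --- the original one on the first segment and is vacuous (already past, guarded by $\neg\done$ being irrelevant to environment variables) on later segments. Once these invariants are in place, concatenating the per-segment guarantees yields $\rho \models \IniE \rightarrow (\IniS \land \G(\AssE \rightarrow \AssS))$, i.e. $W$ is winning for $\contract$.
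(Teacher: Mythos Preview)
Your proposal is correct and follows essentially the same approach as the paper's proof: both decompose the composed play into mode-governed segments, argue that within each segment $W_{i}$ winning $\Pi_{i}$ forces the original guarantees to hold there, and use the jump constraints (lines 13--15 of Alg.~\ref{alg:projection}) to ensure pending $\Next$-obligations are discharged by the arriving mode's initial condition at each segment boundary. The paper frames the argument as a minimal-counterexample contradiction rather than a direct induction and is considerably sketchier about the segment-boundary invariant and the depth-greater-than-one obligation case that you spell out, but the underlying structure is the same.
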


\begin{proof}
  Let $\contract$ be a specification, $M = \{m_1, m_2, \ldots, m_n\}$ and
  $I=\{I_1,\ldots,I_n\}$ a mode description.
  Also, let's consider $Pr = \{\Pi_1, \ldots, \Pi_n\}$ be the projection generated by
  Alg.~\ref{alg:projection}.
  We assume that all sub-specifications are realizable.
  Let $St = \{W_1, \ldots, W_n\}$ be winning strategies for each
  of the sub-specifications and let $W:(Q,s,\delta,L)$ be the strategy
  for the original specifications generated by Alg.~\ref{alg:moby2}.
  We will show now that $W$ is a winning strategy.
  The essence of the proof is to show that if a mode $m_j$ starts at
  position $i$ and the system follows $W$, this corresponds to follow
  $W_j$.
  In turn, this guarantees that $\PR[j]$ holds until the next mode is
  entered (or ad infinitum if no mode change happens), which
  guarantees that $\contract$ holds within the segment after the new
  mode enters in its initial state.
  By induction, the result follows.

  By contradiction, assume that $W$ is not winning and let
  $\rho\in 2^{\VX\cup\VY}$ be a play that is played according to $W$
  that is loosing for the system.
  In other words, there is position $i$ such that $\rho^i$ violates
  some requirement in $\contract$.
  Let $i$ be the first such position.
  Let $m_j$ be the mode at position $i$ and let $i'<i$ be the position
  at which $m_j$ is the mode at position $i'$ and either $i'=0$ or the
  mode at position $i'-1$ is not $m_j$.
  \begin{compactitem}
    \item If $i'=0$, between $0$ and $i$, $W$ coincides with $W_j$.
      Therefore, since $W_j$ is winning $\Pi[j]$ must hold at $i$, which
      implies that $\contract$ holds at $i$, which is contradiction.
    \item Consider now the case where $i'-1$ is not $m_j$, but some
      other mode $m_l$.
      Then, since in $m_l$ is winning $W_l$, it holds that $\PR[l]$
      holds at $i'-1$ so, in particular all pending obligations are
      implied by $I_j$.
      Therefore, the suffix trace $\rho^{i'}$ is winning for $W_j$.
      Again, it follows that $\contract$ holds at $i$, which is a
      contradiction.
    \end{compactitem}
    Hence, the lemma holds.
    \qed
\end{proof}
The following corollary follows immediately.

\begin{corollary}[Semi-Realizability]
  \label{cor:mobi}
  Given a specification $\contract$, a set $M$ of valid system
  modes and a set $I$ of initial conditions. If all
  projections generated by \mobi are realizable, then $\contract$ is
  also realizable.
\end{corollary}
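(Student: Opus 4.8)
The plan is to derive the corollary directly from Lemma~\ref{lemma:moby2}, together with the standard fact recalled at the end of Section~\ref{sec:preliminaries} that every realizable specification admits a winning strategy representable as a finite Mealy machine. First I would fix a specification $\contract$, a legal set of modes $M$, and a set $I$ of initial conditions, and let $\PR = \{\Pi_1,\ldots,\Pi_n\}$ be the projections returned by \mobi on these inputs. Assuming every $\Pi_i$ is realizable, I would, for each $i\in\{1,\ldots,n\}$, choose a Mealy machine $W_i = (Q_i, s_i, \delta_i, L_i)$ encoding a winning strategy for $\Pi_i$; such a machine exists precisely because $\Pi_i$ is realizable.

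Next I would feed $W_1,\ldots,W_n$ into Alg.~\ref{alg:moby2} (\Compose) to obtain the composed machine $W = (Q,s,\delta,L)$. Since $M$ and $I$ form, by hypothesis, a valid mode description for $\contract$, and since each $W_i$ is a winning strategy for the corresponding projection $\Pi_i\in\PR$, the hypotheses of Lemma~\ref{lemma:moby2} are met verbatim, so the lemma yields that $W$ is a winning strategy for $\contract$. In particular a winning strategy for $\contract$ exists, which is exactly what it means for $\contract$ to be realizable, completing the argument.

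There is essentially no obstacle left here: the corollary is a repackaging of Lemma~\ref{lemma:moby2}, and the only point requiring a word of care is the passage from ``$\Pi_i$ is realizable'' to ``$\Pi_i$ has a Mealy-machine winning strategy,'' because Lemma~\ref{lemma:moby2} is phrased in terms of the machines $W_i$ rather than abstract strategies. This passage is exactly the well-known fact stated in Section~\ref{sec:preliminaries}, so invoking it closes the small gap; alternatively one could restate Alg.~\ref{alg:moby2} and Lemma~\ref{lemma:moby2} for arbitrary strategies, but the finite-state formulation is the natural one since synthesis tools return Mealy controllers and \Compose\ operates on them.
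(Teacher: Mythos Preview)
Your proposal is correct and matches the paper's approach: the paper simply states that the corollary ``follows immediately'' from Lemma~\ref{lemma:moby2}, and your write-up spells out exactly that immediate derivation, including the one small step (passing from realizability of each $\Pi_i$ to the existence of a Mealy machine $W_i$) needed to meet the hypotheses of the lemma.
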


\newcommand{\RQone}{\textcolor{blue}{RQ1}\xspace}
\newcommand{\RQtwo}{\textcolor{blue}{RQ2}\xspace}
\newcommand{\RQthree}{\textcolor{blue}{RQ3}\xspace}

\section{Empirical Evaluation}
\label{sec:empirical}
\label{sec:research-questions}

We implemented \mobi in the \textit{Java} programming language using
the well-known \textit{Owl} library~\cite{Kretinsky+2018} to
manipulate LTL specifications.
\mobi integrates the LTL satisfiability checker
Polsat~\cite{DBLP:journals/corr/LiP0YVH13}, a portfolio consisting of
four LTL solvers that run in parallel.
%
%
To perform all realizability checks, we discharge each
sub-specification to
\textit{Strix}~\cite{meyer2018strix}.
All experiments in this section were run on a cluster equipped with a
Xeon processor with a clock speed of 2.6GHz, 16GB of RAM, and running
the GNU/Linux operating system.

We report in this section an empirical evaluation of \mobi.
We aim to empirically evaluate the following research questions:
\begin{compactitem}
\item \RQone: \emph{How effective is \mobi in decomposing mode-based specifications?}
\item \RQtwo: \emph{Does \mobi complement state of the art synthesis tools?}
\item \RQthree: \emph{Can \mobi be used to improve the synthesis time?}
\end{compactitem}

\begin{figure}[b!]
  \centering
  \small
\begin{tabular}{|c|r|r|r|r|}
\hline
Case	& \#A - \#G & \#Modes & \#In & \#Out\\
\hline
10-Counter-Machine & 2-15 & [2,5,10]  & 2 & 12\\
20-Counter-Machine & 2-25 & [2,5,10,20]  & 2 & 22\\
50-Counter-Machine & 2-55 & [2,5,10,50]  & 2 & 52\\
100-Counter-Machine & 2-105 & [2,5,10,50,100]  & 2 & 102\\
Minepump & 3-4 & [2] & 300 & 5\\
Sis(n) & 2-7 & [3]  & (2+n) & 7\\
Thermostat(n) & 3-4 & [3]  & (31+n) & 4\\
Cruise(n) & 3-15 & [4]  & (5+n) & 8\\
AltLayer(n) & 1-9 & [3]  & n & 5\\
Lift(n) & 1-187 & [3]  & n & (4+n)\\
\hline
\end{tabular}
\caption{Assumptions (A), Guarantees (G), Modes, Variables}
\label{tab:case-studies}
\end{figure}

\begin{figure}[t!]
    \centering
    \small
    \begin{tabular}{|l|r||r|r||r|r|r|r||}\hline
      Specification & \#Modes & \multicolumn{2}{|c||}{Synthesis Time (s)} & \multicolumn{4}{|c||}{Specification Size} \\
      \cline{3-8}
      & & \multirow{2}{*}{Monolithic} & \multirow{2}{*}{\mobi} & \multicolumn{2}{|c|}{Monolithic} & \multicolumn{2}{|c||}{\mobi}\\
      \cline{5-8}
      & & & & \#Clauses & Length & \#Clauses & Length \\
      \hline \hline
      \multirow{3}{*}{\textit{CM10}} 
      & 2  & \multirow{3}{*}{26} & 0.32 & \multirow{3}{*}{48}  & \multirow{3}{*}{252} & 28 & 117 \\  
      & 5  &                      & 0.67 &                     &  & 8 & 30 \\  
      & 10  &                     & 0.58 &                     &  & 8 & 39 \\ 
      \hline
      \multirow{3}{*}{\textit{CM20}} 
      & 2  & \multirow{4}{*}{\textbf{Timeout}} & 3.62 & \multirow{4}{*}{88}  & \multirow{4}{*}{672} & 48 & 252 \\  
      & 5  &                      & 1.15 &                     &  & 24 & 96 \\  
      & 10  &                     & 2.06 &                     &  & 16 & 60 \\    
      & 20  &                     & 1.08 &                  &  & 8 & 49 \\    
      \hline 
      \multirow{3}{*}{\textit{CM50}} 
      & 2  & \multirow{4}{*}{\textbf{Timeout}} & 2.56 & \multirow{4}{*}{208}  & \multirow{4}{*}{3132} & 136 & 1036  \\  
      & 5  &                      & 19 &                     &      &  48 & 256 \\  
      & 10  &                     & 3 &                     &       &  28 &  117 \\    
      & 50  &                     & 1.67 &                     & &   8 & 79 \\    
      \hline 
      \multirow{5}{*}{\textit{CM100}} 
      & 2  & \multirow{4}{*}{\textbf{Timeout}} & \textbf{Timeout} & \multirow{4}{*}{408}  & \multirow{4}{*}{11232} & 208 &  3132\\  
      & 5  &                      & 5.12 &                     &  &  88 &  672\\  
      & 10  &                     & 4 &                     &  &  48 &  252\\    
      & 50  &                     & 9 &                     &  &  19 &  57 \\    
      & 100  &                    & 3.23 &                  &  &   8 & 129 \\   
      \hline
      \textit{Minepump} & 2  & 140 & 90 & 11598  & 21365 & 5800 & 10685 \\
      \hline 
      \textit{Sis-250} & 3  & 18                  & 2 & 521  & 1072 & 133 & 287 \\  
      \hline 
      \textit{Sis-500} & 3  & 96                & 4 & 1021  & 2072 & 258 & 538 \\  
      \hline
      \textit{Sis-1000}  & 3  & \textbf{Timeout}  & 11 & 2021  & 4072 & 508 & 1035 \\  
      \hline
      \textit{Sis-1500} & 3  & \textbf{Timeout} & 20 & 3021  & 6072 & 758 & 1538 \\  
      \hline
      \textit{Sis-2000} & 3  & \textbf{Timeout} & 38 & 4021  & 8072 & 1258 & 2300 \\  
      \hline
      \textit{Sis-4000} & 3  & \textbf{Timeout} & 157 & 8021  & 16072 & 2678 & 3560 \\  
      \hline
      \textit{Sis-4500} & 3  & \textbf{Timeout} & 172 & 9020  & 18040 & 3006 & 4002 \\  
      \hline
      \textit{Sis-5000} & 3  & \textbf{Timeout} & 268 & 10020  & 20040 & 3340 & 4447 \\    
      \hline
      \textit{Thermostat-10} & 3  & 1 & 1 & 73  & 151 & 42 & 97 \\    
      \hline
      \textit{Thermostat-20} & 3  & \textbf{Timeout} & 1 & 172  & 276 & 75 & 152 \\    
      \hline
      \textit{Thermostat-100} & 3  & \textbf{Timeout} & 10 & 4032  & 4416 & 1375 & 1652 \\    
      \hline
      \textit{Thermostat-200} & 3  & \textbf{Timeout} & 48 & 12132  & 12916 & 4075 & 4619 \\ 
      \hline
      \textit{Cruise-150} & 4  & 75 & 63 & 15339  & 15855 & 6824 & 7067 \\ 
      \hline
      \textit{Cruise-200} & 4  & 132 & 100 & 30039  & 30756 & 10025 & 10294 \\ 
      \hline
      \textit{Cruise-500} & 4  & \textbf{Timeout} & 770 & 118239  & 120153 & 39425 & 40097 \\ 
      \hline
      \textit{AltLayer-50} & 3  & 15 & 9 & 3685  & 4147 & 1234 & 1395 \\ 
      \hline
      \textit{AltLayer-100} & 3  & 41 & 25 & 8885  & 9747 & 2968 & 3269 \\ 
      \hline
      \textit{AltLayer-150} & 3  & 153 & 100 & 30685  & 31947 & 10234 & 10699 \\ 
      \hline
      \textit{AltLayer-200} & 3  & \textbf{Timeout} & 269 & 52485  & 54147 & 17500 & 18064 \\    
      \hline
      \textit{Lift-5} & 3  & 1 & 2 & 310  & 884 & 122 & 355 \\ 
      \hline
      \textit{Lift-10} & 3  & 34 & 9 & 1585  & 4014 & 597 & 1522 \\ 
      \hline
      \textit{Lift-15} & 3  & \textbf{Timeout} & 162 & 4560  & 10844 & 1672 & 3989 \\ 
      \hline
      \textit{Lift-20} & 3  & \textbf{Timeout} &  789 & 6394 & 14948 & 3597 & 8255 \\  
      \hline
    \end{tabular}
    \caption{Comparision between \mobi and Monolithic}
    \label{fig:empirical-evaluation}
  \end{figure}

  \begin{figure}[b!]
    \begin{tabular}{@{}c@{}}
%
      \includegraphics[width=1.0\textwidth]{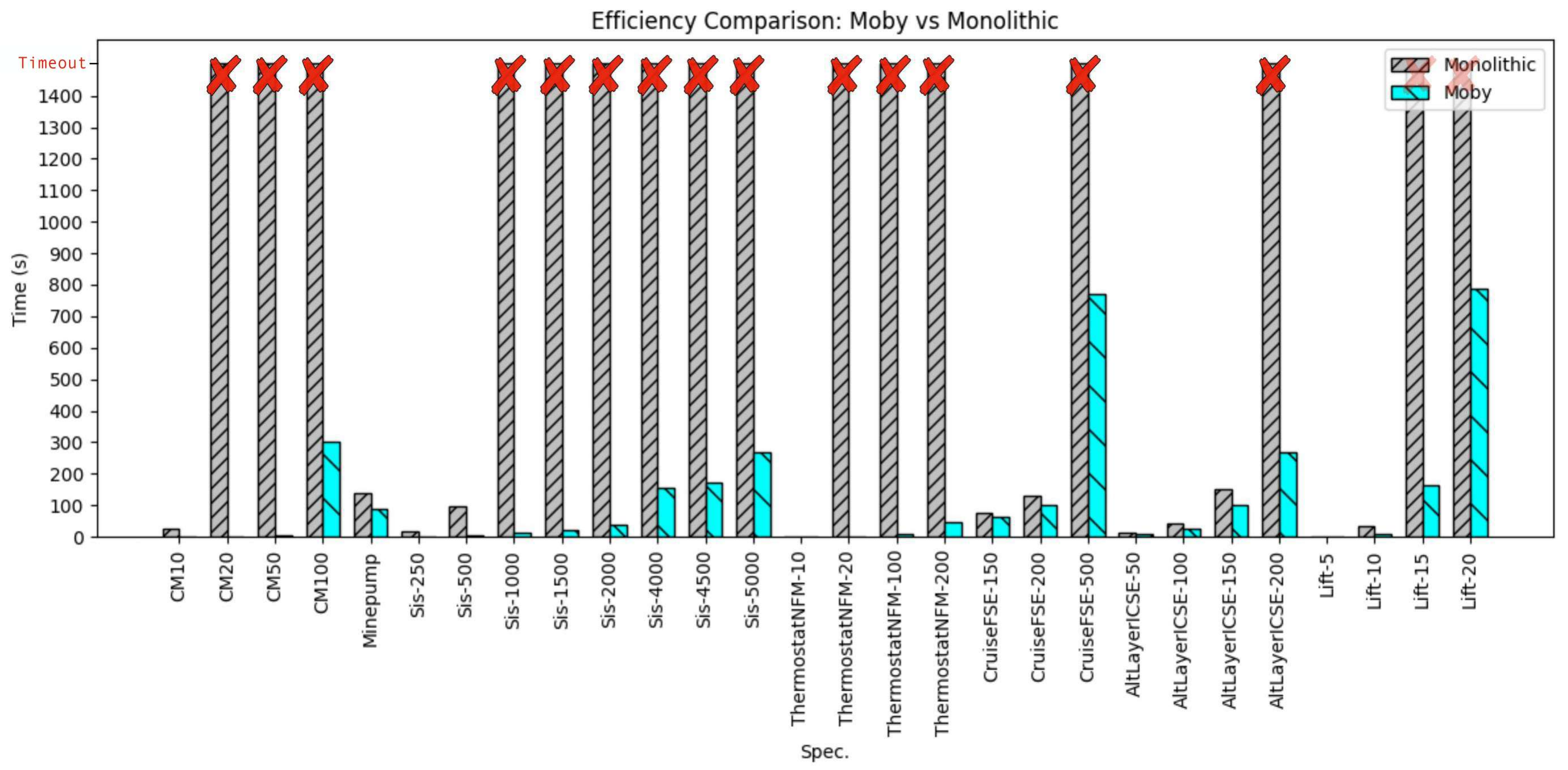} \\ 
      \includegraphics[width=1.0\textwidth]{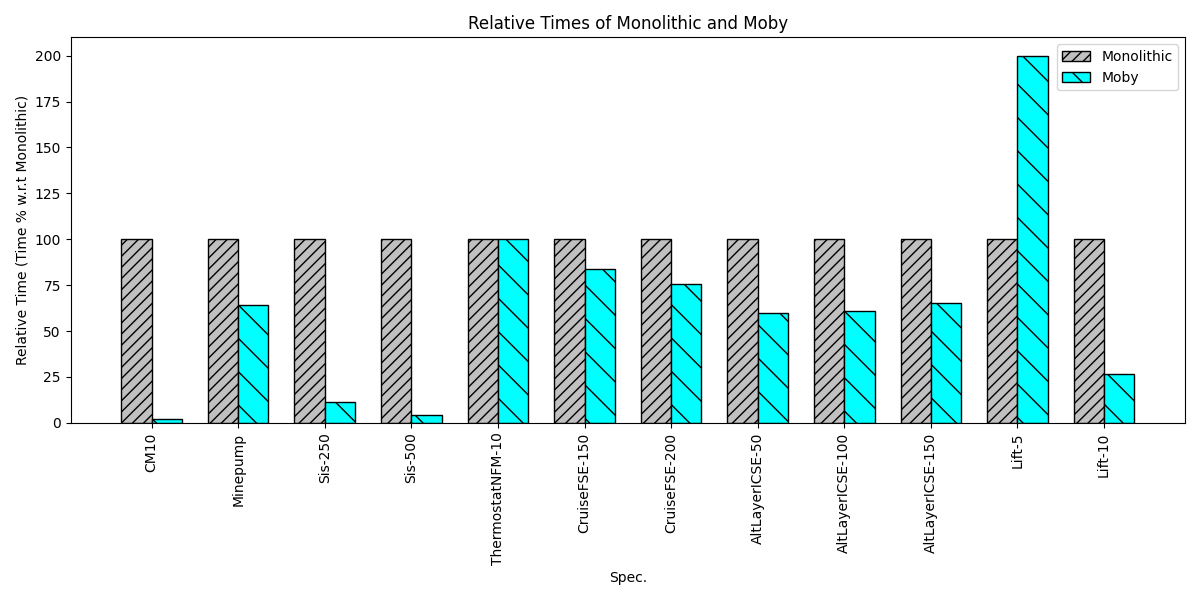} 
    \end{tabular}
    \vspace{-2em}
    \caption{Speed of \mobi vs monolithic synthesis. The figure above
      shows the time taken by a monolithic synthesis tool
      and the time taken by \mobi. The figure below normalizes
      the monolithic time to 100 for those that did not reach Timeout.}
    \label{fig:side_by_side}
  \end{figure}

To address them, we analyzed specifications from published literature,
evaluation of \emph{RE} tools, and case studies on SCR
specification and analysis:
\changing{
  \begin{compactitem}
  \item our counter machine running example \CM with varying bounds.
  \item Minepump: A mine pump
    controller~\cite{brizzio2023automated,carvalho2023acore,Degiovanni+2018,letier2008deriving},
    which manages a pump with sensors that detect high
    water levels and methane presence.
  \item Thermostat($n$): A thermostat~\cite{fifarek2017spear} that
    monitors a room temperature controls the heater and tracks heating
    duration.
  \item Lift($n$): A simple elevator controller for $n$ floors~\cite{SYNTCOMP}.
  \item Cruise($n$): A cruise control system~\cite{kirby1987example}
    which is in charge of maintaining the car speed on the
    occurrence of any event.
  \item Sis($n$): A safety injection system~\cite{degiovanni2018improving}, responsible for
    partially controlling a nuclear power plant by monitoring water
    pressure in a cooling subsystem.
  \item AltLayer($n$): A communicating state machine
    model~\cite{bultan2000action}.
  \end{compactitem}}
%

%
Fig.~\ref{tab:case-studies} shows the number of input/output
variables, assumptions (A), guarantees (G), and the number of modes
for each case.

\subsubsection{Experimental Results.}
To address \RQone we compare the size of the original specification
with the size of each projection measured by the number of clauses and
the formula length. 
\changing{To determine the formula's length, 
we adopt the methodologies outlined 
in~\cite{brizzio2023automated,carvalho2023acore}}.
Additionally, we compared the running time required for synthesizing
the original specification with the time taken for each projection,
note that we report the aggregated time taken to synthesize the
systems for each projection, when they can be solved independently and
in parallel to potentially improve efficiency.
The summarized results can be found in Fig.~\ref{fig:empirical-evaluation}.
We also provide additional insights in Fig.\ref{fig:side_by_side},
which highlights the significance of \mobi in enhancing the synthesis
time.
%

%
\changing{Our analysis demonstrates that \mobi successfully decomposes
  100\% of the specifications in our corpus, which indicates that
  \mobi is effective in handling complex specifications.
  Furthermore, \mobi consistently operates within the 25-minute
  timeout limit in all cases.
  In contrast, other relevant simultaneous decomposition
  methods~\cite{finkbeiner2022specification,filiot2010compositional}
  failed to decompose any of the specifications in our benchmark.
  This can be attributed to the intricate interdependencies between
  variables in our requirements, as elaborated in
  Section~\ref{sec:intro}.
  This observation not only supports the effectiveness of \mobi but
  also validates \RQtwo.

  Expanding on the impact of \mobi, our results show an average
  reduction of 64\% in specification size and a 65\% reduction in the
  number of clauses.
  These reductions underscore the advantages of employing \mobi in
  synthesizing implementations for LTL specifications that are beyond
  the capabilities of monolithic synthesizers.
  Additionally, \mobi's ability to achieve faster synthesis times for
  feasible specifications positions it as a compelling alternative to
  state-of-the-art synthesis tools. This suggests the validity of
  \RQthree.}


\section{Conclusion and Future Work}
\label{sec:conclusion}
We presented mode based decomposition for reactive
synthesis.
As far as we know, this is the first approach that exploits modes to
improve synthesis scalability.
Our method takes an LTL specification, along with a set of modes
representing different stages of execution, and a set of initial
conditions for each mode.
Our method computes projection for each mode ensuring that if all of
them are realizable, then the original specification is also
realizable.

We performed an empirical evaluation of an implementation of \mobi on
several specifications from the literature.
Our evaluation shows that \mobi successfully synthesizes
implementations efficiently, including cases for which monolithic
synthesis fails.
These results indicate that \mobi is effective for decomposing
specifications and can be used alongside other decomposition tools.

Even though modes are natural in RE, the need to
specify initial conditions is the major drawback of our technique.
We are currently investigating how to automatically compute the
initial conditions, using SAT based exploration.
We are also investigating the assessment of the quality of the
specifications generated using \mobi.



\clearpage

\vfill
\pagebreak

\bibliographystyle{splncs04}
\bibliography{short}

\vfill

\end{document}